\newcommand{\beq}{\begin{equation}}
\newcommand{\eeq}{\end{equation}}
\newcommand{\beqnl}{\begin{equation*}}
\newcommand{\eeqnl}{\end{equation*}}
\newcommand{\sof}[1]{\left[#1\right]}
\newcommand{\set}[1]{\left\lbrace #1\right\rbrace}
\newcommand{\of}[1]{\left(#1\right)}
\newcommand{\piecewise}[4]{
\ensuremath{\left\{
   \begin{array}{l l}
   #1 & \quad #2\\
   #3 & \quad #4\\
   \end{array} \right.
}
}
\newcommand{\Prob}[1]{\text{Pr}\sof{\,#1\,}}
\newcommand{\E}[1]{\mathbb{E}\sof{#1}}
\newcommand{\floor}[1]{\lfloor #1 \rfloor}
\newcommand{\ceiling}[1]{\left\lceil #1 \right\rceil}
\newcommand{\R}{\ensuremath{\mathbb{R}}}
\newcommand{\N}{\ensuremath{\mathbb{N}}}
\newcommand{\comment}[1]{}
\newcommand{\mc}[1]{\ensuremath{\mathcal{#1}}}
\newcommand{\suchthat}{\,:\,}
\newcommand{\given}{\,|\,}
\newcommand{\emdash}{\phantom{.}---\phantom{.}} % Add a hair space that scales with font size
\newcommand{\etal}{\textit{et al.}}
\newcommand{\graph}{\mathcal{H}_{n,p}}
\newcommand{\RoutingTime}{\mathcal{T}}
\newcommand{\lambdaMax}{\lambda}
\newcommand{\bestRoute}[2]{\Prob{#1 \succ #2}}
\newcommand{\defn}[1]{\textit{\textbf{#1}}}
\newtheorem{theorem}{Theorem}
\newtheorem{corollary}{Corollary}
\newtheorem{lemma}{Lemma}
\title{Optimal Degree Distributions for Uniform Small World Rings}
\author{
R. Seth Terashima\footnote{{\tt sethterashima@gmail.com}, contact author.}
\and James D. Fix
\footnote{{\tt jimfix@reed.edu}, Mathematics Department, Reed College,
3203 S.E. Woodstock Blvd. Portland, OR 97202}}
\begin{document}
\maketitle
\begin{abstract}
Motivated by Kleinberg's \citep{kleinberg_lattice} and subsequent work, we consider the performance of greedy routing on a directed ring of $n$ nodes augmented with long-range contacts.
In this model, each node $u$ is given an additional $D_u$ edges, a degree chosen from a specified probability distribution.
Each such edge from $u$ is linked to a random node at distance $r$ ahead in the ring with probability proportional to $1/r$, a ``harmonic" distance distribution of contacts.
Aspnes et al.\, \citep{aspnes_fault_tolerant} have shown an $O(\log^2 n / \ell)$ bound on the expected length of greedy routes in the case when each node is assigned exactly $\ell$ contacts and, as a consequence of recent work by Dietzfelbinger and Woelfel \citep{ring_lower_bound_tight}, this bound is known to be tight.
In this paper, we generalize Aspnes' upper bound to show that {\em any} degree distribution with mean $\ell$ and maximum value $O(\log n)$ has greedy routes of expected length $O(\log^2n / \ell)$, implying that any harmonic ring in this family is asymptotically optimal.
Furthermore, for a more general family of rings, we show that a fixed degree distribution is optimal.
More precisely, if each random contact is chosen at distance $r$ with a probability that decreases with $r$, then among degree  distributions with mean $\ell$, greedy routing time is smallest when every node is assigned $\floor{\ell}$ or $\ceiling{\ell}$ contacts.
\end{abstract}

\section{Introduction}
\comment{
Here, we take on the study of greedy routing in Kleinberg's lattice graphs \citep{kleinberg_lattice}; in particular, we consider certain directed ring models studied by Aspnes \etal\  \citep{aspnes_fault_tolerant} and Barri\`{e}re \etal\  \citep{ring_model}, among others.
Our main contribution, built directly off their work, is an upper bound on greedy routing time as a function of network size and node out-degree, one that is independent of how node out-degrees are distributed and that nearly matches the lower bound of Giakkoupis and Hadzilacos \citep{ring_lower_bound}.
The end result is an asymptotically precise description of expected routing time in these networks.
}
\subsection{Background} 
Our work extends results that lie at the intersection of mathematically modeling the small world phenomenon in social networks and the design of decentralized peer-to-peer networks.
In both contexts, a central problem is determining how efficiently a message can be routed between arbitrary nodes of a network.

The notion of a \defn{small world} is most frequently encountered in the context of social networks.
The term refers to systems where entities are highly clustered and linked to only a small portion of the network, but are nevertheless connected by short paths. 
Research, notably the letter-forwarding experiments conducted by Stanley Milgram in the 1960s \citep{milgram}, suggests that small world networks exist in the real world.
The work of Kleinberg \citep{kleinberg_lattice} and others (see Kleinberg \citep{kleinberg_review} for a review) provides insight into conditions under which people can efficiently \emph{find} short paths using only local information, as modeled by, say, greedy routing.

Kleinberg's model begins with an $n$-by-$n$ lattice of nodes.
Each node is connected to all other nodes within a specified distance.
Additionally, each node is given $\ell$ \defn{long-range contacts} (or \textit{LRCs}) chosen according to some stochastic process.
Kleinberg considered power-law distributions, in which the probability that a node $u$ chooses node $v$ as an LRC is proportional to $\delta^{-\beta}$, where $\beta$ is a constant and $\delta$ is the distance from $u$ to $v$.
For $\beta = 2$ and $\ell = 1$, he showed greedy routing takes $O(\log^2 n)$ (this bound is tight \citep{analyzing_kleinberg}), whereas for $\beta \neq 2$ greedy routing time is bounded below by a polynomial in $n$. In general, the optimal value for $\beta$ is equal to the dimension of the lattice.

Subsequent work has instead considered a ring model. 
Barri\`{e}re \etal\ \citep{ring_model} showed that in this variation, $\beta = 1$ and $\ell = 1$ allows $O(\log^2 n)$ routing time.
Aspnes, Diamadi, and Shah \citep{aspnes_fault_tolerant} generalized this to $O(\log^2 n / \ell)$ as part of a proposed P2P network.
Their system bears many similarities to Chord \citep{chord}, a system for maintaining a DHT which provides $\Theta(\log n)$ routing time using a ring-based overlay network with $\log_2 n$ LRCs per node.

In the context of both social and computer networks (particularly those designed with fault-tolerance in mind), it makes sense to consider graphs in which nodes have a random number of LRCs. 
Fraigniaud and Giakkoupis \citep{power_law_distributions} studied the effect of power-law LRC-degree distributions on the ring-based model. 
(We distinguish between LRC-degree distributions, which control the number of LRCs assigned to a node, and LRC-distance distributions, which dictate how those nodes are chosen). 
In particular, they consider a family of zeta distributions, modified to hold the mean at two regardless of the power-law exponent. 
For directed graphs, greedy routing performs in $O(\log^2 n)$ time, while for undirected graphs, routing time depends critically on the power-law exponent.

Work on the corresponding lower bounds considers a broader class of graphs. 
In this model, each node is randomly assigned a set $D \subset \set{1, \ldots, n}$, which contains the distances to that node's LRCs. 
(This allows random graphs unobtainable with independent LRC-degree and LRC-distance distributions).
This process is uniform\emdash the distribution used to choose $D$ is the same for all nodes. 
Giakkoupis and Hadzilacos \citep{ring_lower_bound} gave an $\Omega(\log^2 n / \E{|D|} a^{\log^* n})$ bound on the average expected routing time (where $a > 1$ is a constant), which was later improved to $\Omega(\log^2 n / \E{|D|})$ by Dietzfelbinger and Woelfel \citep{ring_lower_bound_tight}.

\subsection{Statement of results}
The $\Omega(\log^2 n/ \E{|D|})$ lower bound is tight in the sense that the model under consideration permits distributions resulting in $O(\log^2 n/ \E{|D|})$ routing time, such as those studied by Aspnes (if $\ell$ LRCs are chosen with replacement, then $\ell \geq \E{|D|}$).                                                                  
However, establishing upper bounds for different distributions remains an open problem. In this paper, we consider the ring model (with a harmonic LRC-distance distribution). We show that if the LRC-degree distribution has mean $\ell$ and the property that no node can have more than $O(\log n)$ LRCs, then the expected routing time between any two nodes is $O(\log^2 n / \ell)$ (Theorem~\ref{thm:general_dist_upper_bounds}). Hence, this sub-family of graphs provides asymptotically optimal routing time. 

Finally, fixing the mean degree, we investigate what LRC-degree distributions optimize greedy routing performance. We give Theorem \ref{thm:equal_is_best}, whose lemmata establish that gaining contacts provides limited returns on the expected length of each greedy hop. This holds for any LRC-distance distribution under which closer nodes are more likely to be selected as LRCs than those farther away. Thus, greedy routing in this family of directed graphs is optimized when LRC-degrees do not vary.

\section{Model Description}
Let $\mc{R}_n = (V, E)$ be the directed ring graph with $n$ vertices,
which we identify with the integers: 
\beqnl 
	V = \set{0, \ldots, n - 1}, \quad E = \set{(u, u + 1) \suchthat u \in V}. 
\eeqnl 
All operations on vertices are performed modulo $n$.

Define the function $\delta : V \times V \rightarrow \N$ to be the distance from $u$ to $v$ along the ring:
\beqnl
	\delta(u, v) = \piecewise{v - u}{\text{if } v \geq u}{n - (u-v)}{\text{if } v < u.}
\eeqnl

We wish to construct an augmented graph containing $\mathcal{R}_n$, but where each node has some number of additional out-going edges according to a specified distribution. Let $p(n, \cdot)$ be a probability distribution on $\N$ (that is, there is a different distribution for each value of $n$). With each node $u$ of $\mathcal{R}_n$, associate a random variable $D_u$ taken from this distribution: $\Prob{D_u = k} = p(n, k)$. This variable indicates how many additional edges will be attached to $u$ (since these edges will be chosen with replacement, they will not in general be distinct). In the future, we will write $p(n, k)$ as $p(k)$, with the dependence on $n$ made implicit.

Given $u \in V$ and $j \in \N$, let $\Delta_{u,j} \in \set{1, \ldots, n - 1}$ be a random variable such that
\beqnl
	\Prob{\Delta_{u,j} = r} \propto \frac{1}{r}.
\eeqnl
Note that the proportionality constant is the reciprocal of the $(n-1)^\text{th}$ harmonic number: $H_{n-1}^{-1} = \of{\sum_{i=1}^{n-1} 1/i}^{-1} = \Theta(1 / \log n)$.

Define $E_u = \set{(u, u + \Delta_{u,j}) \suchthat 1 \leq j \leq D_u}$, and let $E' = E \cup \bigcup_{u \in V} E_u$. The graph $\graph = (V, E')$ so constructed is a \defn{harmonic ring}. Given $u \in V$, let $C_u = \set{v \in V : (u, v) \in E_u}$. Elements of $C_u$ are \defn{long-range contacts} (LRCs) of $u$.

For $u, v \in V$ and $A \subset V$, let $\Prob{u \rightarrow v}$ be the probability that $(u, v) \in E'$, and let $\Prob{u \rightarrow A}$ be the probability that there exists a node $w$ with $(u, w) \in E'$.

We now introduce some notation to formalize the notion of greedy routing. If $u$ and $v$ are nodes of $\graph$, a \defn{greedy route} from $u$ to $v$ is a sequence $u = s_0, s_1, \ldots, s_k = v$ such that $(s_j, s_{j + 1}) \in E'$ and if $(s_j, w) \in E'$, then $\delta(s_{j + 1}, v) \leq \delta(w, v)$. Since $(s_j, s_j + 1) \in E'$, we can always make progress towards $v$; a greedy route exists between arbitrary vertices. Because $\delta(\cdot, v)$ is injective, the greedy route is unique. The \defn{greedy routing time} from $u$ to $v$ is $k$. This definition formalizes the notion of always taking the route that looks best from a limited, local perspective: each node ``knows'' (has links to) a limited number of other nodes, and always passes a message along to the one closest to the destination.

Finally, let $T_r$ denote the expected greedy routing time when the distance between the source and destination nodes is $r$, and define $\RoutingTime_{n,p}$ to be the average expected routing time between all pairs of nodes in $\graph$:
\beqnl
	\RoutingTime_{n,p} = \frac{1}{n} \sum_{r = 0}^{n - 1} T_r.
\eeqnl

\section{Routing complexity}
Our upper bound proof follows the same basic outline as Kleinberg's original argument: we first find a bound on the expected time it takes to cut an initial distance in half, and then couple this with the observation that this must be done at most $\log_2 n$ times. 

\begin{lemma}
	\label{lem:halfway_bound}
	Let $\graph = (V, E)$ be a harmonic ring. Let $u, v \in V$ be distinct, and let $B = \set{w \in V \mid \delta(w, v) \leq \delta(u, v)/2 }$. Then $\Prob{u \rightarrow B \mid D_u = 1} = \Theta\of{\frac{1}{\log n}}$.
\end{lemma}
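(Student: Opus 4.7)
The plan is a direct calculation parameterized by $d \equiv \delta(u,v)$. The first step is to describe $B$ explicitly. Because $\delta(w,v) \le d/2$ is a condition on a forward distance, $B$ is the contiguous arc $\{v - \lfloor d/2 \rfloor, \ldots, v - 1, v\}$ consisting of $\lfloor d/2 \rfloor + 1$ nodes; measured from $u$, these lie at forward-distances $\lceil d/2 \rceil, \lceil d/2 \rceil + 1, \ldots, d$. Identifying $B$ this way turns the event $\{u \to B\}$ into a question about where the lone long-range contact of $u$ lands.

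Conditional on $D_u = 1$, the only random edge leaving $u$ points at $u + \Delta_{u,1}$, and it falls in $B$ precisely when $\Delta_{u,1} \in \{\lceil d/2 \rceil, \ldots, d\}$. Invoking the harmonic distribution and its normalizer $H_{n-1}^{-1}$ gives
\[
\Prob{u \to B \mid D_u = 1} \;=\; \frac{1}{H_{n-1}} \sum_{r=\lceil d/2 \rceil}^{d} \frac{1}{r} \;=\; \frac{H_d - H_{\lceil d/2 \rceil - 1}}{H_{n-1}}.
\]
A standard integral comparison then pins the harmonic-number difference at $\ln 2 + O(1/d)$, uniformly bounded between two positive constants; combined with $H_{n-1}^{-1} = \Theta(1/\log n)$ the whole expression is $\Theta(1/\log n)$, as claimed.

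I do not expect any real obstacle: the argument reduces to summing $1/r$ over a range whose endpoints differ by a constant factor and recognizing $H_{n-1}$ as the normalizer. The one caveat worth flagging is the small-$d$ regime, where the deterministic ring edge $(u,u+1)$ itself lies in $B$ (this happens exactly when $d \in \{1,2\}$): for these finitely many exceptional pairs the conditional probability is actually $1$, so I would read the lemma as applying in the intended regime $d \ge 3$, which is the only regime where a meaningful halving of the remaining distance occurs in the downstream $O(\log^2 n/\ell)$ routing-time argument.
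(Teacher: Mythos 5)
Your proof is correct and follows essentially the same route as the paper's: identify $B$ as a contiguous arc at forward distances $\lceil d/2\rceil,\ldots,d$ from $u$, write the conditional probability as $H_{n-1}^{-1}\sum_{r=\lceil d/2\rceil}^{d}1/r$, and bound the sum between positive constants by integral comparison. Your extra care with the floor/ceiling endpoints and the remark about the deterministic ring edge when $\delta(u,v)\le 2$ only tightens what the paper states more loosely.
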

\begin{proof}
	Assume without loss of generality that $u = 0$. Then
	\beqnl
		\begin{split}
		\Prob{u \rightarrow B \mid D_u = 1} &= \sum_{w \in B} \Prob{u \rightarrow w \mid D_u = 1}\\
		&= H_{n-1}^{-1} \sum_{w \in B} \frac{1}{\delta(u, w)}\\
		&= H_{n-1}^{-1} \sum_{r = v / 2}^{v} \frac{1}{r}.
		\end{split}
	\eeqnl
	Since $1/r$ is a decreasing function,
	\beqnl
		\int_{v/2}^{v} \frac{dr}{r} < \sum_{r = v/2}^{v} \frac{1}{r} < \frac{1}{v/2} + \int_{v/2}^{v+1} \frac{dr}{r};
	\eeqnl
	that is,
	\beqnl
		H_{n-1}^{-1} \log 2 < \Prob{u \rightarrow B \mid D_u = 1} < H_{n-1}^{-1} \of{2 + \log 4}.
	\eeqnl
	Hence, $\Prob{u \rightarrow B \mid D_u = 1} = \Theta(H_{n-1}) = \Theta(1 / \log n)$.
\end{proof}

Lemma~\ref{lem:halfway_bound} makes it easy to work with the probability of cutting the remaining distance in half. We will now take advantage of this to formulate and solve a recurrence describing how long greedy routing takes.

\begin{theorem}
	\label{thm:general_dist_upper_bounds}
	Let $\graph$ be a harmonic ring. Let $X$ be a random variable taken from the distribution $p(n,\cdot)$, and let $c > 0$ be a constant such that for all $n$, $\Prob{X \leq c \log n} > 0$.
	Then
	\beqnl
		\RoutingTime_{n,p} = O\of{\frac{\log^2 n}{\E{X \mid X \leq c \log n}}}.
	\eeqnl
\end{theorem}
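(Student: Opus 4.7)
The plan is to adapt Kleinberg's halving argument: partition any greedy route into phases indexed by the dyadic scale of the remaining distance, use Lemma~\ref{lem:halfway_bound} to lower bound the per-step probability of advancing to the next phase, and conclude that each of the $O(\log n)$ phases contributes $O(\log n / \E{X \mid X \leq c\log n})$ steps in expectation.

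Concretely, set $\mu = \E{X \mid X \leq c\log n}$ and $q = H_{n-1}^{-1}\log 2 = \Theta(1/\log n)$, the single-LRC halving probability supplied by Lemma~\ref{lem:halfway_bound}. Along a greedy route $u = s_0, s_1, \ldots, s_T = v$, define the level $h_i = \ceiling{\log_2 \delta(s_i, v)}$; this quantity starts at most $\ceiling{\log_2 n}$, ends at $0$, and is non-increasing. The level strictly drops whenever at least one LRC of $s_i$ lands in $B_i = \set{w \suchthat \delta(w,v) \leq \delta(s_i,v)/2}$, which by Lemma~\ref{lem:halfway_bound} and the independence of the LRC endpoints happens with probability at least $1 - (1-q)^{D_{s_i}}$. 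Because greedy routing visits each node at most once, the degrees $D_{s_i}$ and the endpoints of their LRCs are independent of the prior trajectory, so a standard geometric waiting-time calculation gives the expected length of each level-phase as at most $1/\E{1-(1-q)^X}$.

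What remains is to lower bound $\E{1-(1-q)^X}$ in terms of $\mu$. The plan is to restrict attention to the event $\set{X \leq c\log n}$, on which $qX$ is bounded by a constant $c'$ depending only on $c$; the elementary inequality $1-(1-q)^k \geq \alpha q k$ valid for $qk \leq c'$, with $\alpha$ depending only on $c$, then yields
\beqnl
\E{1-(1-q)^X} \geq \alpha q\, \E{X \cdot \mathbf{1}[X \leq c\log n]} = \alpha q \, \Prob{X \leq c\log n}\, \mu.
\eeqnl
Since $q = \Theta(1/\log n)$, the per-step halving probability is $\Omega(\mu/\log n)$, each phase therefore contributes $O(\log n/\mu)$ steps in expectation, and summing over $\ceiling{\log_2 n}$ phases produces $T_r = O(\log^2 n/\mu)$ uniformly in $r$, hence $\RoutingTime_{n,p} = O(\log^2 n/\mu)$ after averaging.

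The step I expect to demand the most care is the independence claim underpinning the geometric waiting-time argument: I need that, conditional on the greedy trajectory through step $i$, the degree $D_{s_i}$ and the endpoints of its LRCs are still distributed according to their unconditional laws. This holds because greedy routing never revisits a node, so the randomness attached to $s_i$ is untouched by earlier steps, but making the argument rigorous requires either an induction on phases or an explicit coupling that decouples the LRC data at each newly visited node from the history. The other point worth emphasizing, though technically routine, is the restriction to $\set{X \leq c\log n}$: it is exactly what keeps a potentially heavy-tailed $X$ from making $\E{1-(1-q)^X}$ uninformative, since a tail piled on very large degrees contributes almost fully saturated hit probabilities that would otherwise be invisible in the crude linearization used above.
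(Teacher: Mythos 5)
Your overall strategy is the same as the paper's: dyadic halving phases, a geometric waiting-time bound of $1/\E{1-(1-q)^X}$ per phase, and a linearization of $1-(1-q)^d$ restricted to degrees $d\leq c\log n$. But there is a genuine gap at the last step. Your lower bound reads $\E{1-(1-q)^X}\geq \alpha q\,\Prob{X\leq c\log n}\,\mu$, and you then conclude that the per-step halving probability is $\Omega(\mu/\log n)$; this silently discards the factor $\Prob{X\leq c\log n}$, which the theorem assumes only to be positive, not bounded away from zero. If, for instance, $p$ puts mass $1/\log n$ on $d=1$ and the remaining mass on $d=n$, then $\mu=1$ while your bound certifies a per-step success probability of only $\Omega(1/\log^2 n)$, hence $O(\log^3 n)$ routing time --- short of the claimed $O(\log^2 n/\mu)$.

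The missing ingredient is exactly the tail contribution you mention in your closing remark but never feed back into the bound: when $\Prob{X>c\log n}$ is large, those degrees each contribute $1-(1-q)^d\geq 1-(1-q)^{\floor{c\log n}}$, which converges to a positive constant depending on $c$, and since $q\mu\leq qc\log n=O(1)$ a constant already dominates $\Omega(q\mu)$. Splitting on whether $\Prob{X\leq c\log n}$ exceeds $1/2$ therefore repairs the argument. The paper folds both cases into a single inequality: writing $1-x^d=(1-x)f(d)$ with $f(d)=1+x+\cdots+x^{d-1}$ and comparing $p$ against its conditional distribution on $\set{0,\ldots,L}$, it shows $\sum_d p(d)f(d)$ is at least the conditional expectation of $f$, precisely because the tail sum $\sum_{d>L}p(d)f(d)>f(L)(1-P)$ makes up the deficit incurred by renormalizing by $1/P$. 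Aside from this point, your proposal matches the paper's argument in substance, including the deferred-decisions issue you flag.
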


\begin{proof}
	We will prove that this upper bound holds for the expected routing time between arbitrary source-target pairs, Consider the greedy route from $u$ to $v$. How many steps does it take to cut the initial distance in half? We found an answer to this question under the assumption that each node had a single LRC, but now require a more general result. As before, define $B = \set{w \in V \mid \delta(w, v) \leq \delta(u, v) / 2}$. The probability that $u$ has an LRC in $B$ is the probability that \textit{not all} of $u$'s contacts \textit{miss} $B$:
	\beqnl
	\begin{split}
		\Prob{u \rightarrow B} &= 
			\sum_{d=0}^{\infty}p(d)\of{1 - \Prob{u \not\rightarrow B \mid D_u = d}}\\
			&= 1 - \sum_{d=0}^{\infty}p(d)\of{1 - \Prob{u \rightarrow B \mid D_u = 1}}^{d}.
	\end{split}
	\eeqnl

	The probability that $u$ is linked to a node in $B$ is at least $\Prob{u \rightarrow B}$, since the latter value does not account for the $(u, u+1)$ edge. Furthermore, the closer a message gets to $B$, the greater its chances of entering $B$ on the next step; that is, $\delta(w, v) < \delta(w', v)$ implies $\Prob{w \rightarrow B} > \Prob{w' \rightarrow B}$. This follows from the fact that $\Prob{w \rightarrow v'} > \Prob{w' \rightarrow v'}$ for all $v' \in B$. Therefore if $s_j$ is on the greedy route from $u$ to $v$, $\Prob{s_j \rightarrow B} \geq \Prob{u \rightarrow B}$.

	If $s_0, \ldots s_k$ is the greedy route from $u$ to $v$, let $M$ be the random variable defined by $M = \min\set{j \suchthat s_j \in B}$. We have
	\beqnl
		\E{M} < \frac{1}{\Prob{u \rightarrow B}} = 
			\frac{1}{1 - \sum_{d=0}^{\infty}p(d)\of{1 - \Prob{u \rightarrow B \mid D_u = 1}}^{d}}.
	\eeqnl

	Since $\Prob{u \rightarrow B \mid D_u = 1} = \Theta(1/\log n)$, it follows that there exists some positive constant $\beta$ such that for all sufficiently large $n$, $\Prob{u \rightarrow B \mid D_u = 1} > \beta / \log n$. Let $x = 1 - \beta / \log n$ (although $x$ depends on $n$ we will refrain from adding a subscript, so as to avoid clutter). In other words, $x$ is an upper bound for the probability that a given LRC \textit{fails} to cut the remaining distance in half. Hence, for large $n$,
	\beqnl
		\E{M} < \frac{1}{1 - \sum_{d = 1}^{\infty}p(d)x^d} \overset{\text{call}}{=} \lambdaMax.
	\eeqnl
	The value of $\lambdaMax$ is independent of $u$ and $v$. Therefore $\lambdaMax$ is an upper bound for the expected time it takes to cut the remaining distance in half between \textit{any} two nodes in $\graph$. Hence,
	\beqnl
		T_r < \lambdaMax+\max\set{T_s \suchthat s \leq r/2}.
	\eeqnl
	Since $T_0 = 0$, this yields:
	\beqnl
		T_r < \lambdaMax \log_2 r.
	\eeqnl
	Therefore $\lambdaMax \log_2 n$ is an upper bound for the expected routing time between any two vertices (and hence is an upper bound for the average expected routing time over all pairs of vertices). Thus
	\beq
		\label{eq:upper_bound}
		\RoutingTime_{n,p} < \frac{\log_2 n}{{1 - \sum_{d = 0}^{\infty}p(d)x^d}}.
	\eeq

	Let $L = \floor{c \log n}$, and define the probability distribution $q$ by:
	\beqnl
		q(d) = \left\{
			\begin{array}{l l}
				p(d) / P & \quad \text{if $d \leq L$}\\
				0 & \quad \text{otherwise}\\
			\end{array} \right.,
	\eeqnl
	where $P = \Prob{X \leq L}$. Let $Y$ be a random variable taken from the distribution $q$. Then $\E{X \mid X \leq c \log n} = \E{Y}$. Define the function $A : \N \rightarrow \R$ by
	\beqnl
		A(n) = 1 - \sum_{d=0}^{\infty}p(d)x^{d}.
	\eeqnl

	That is, $A(n)$ is the expression appearing in the denominator of (\ref{eq:upper_bound}). It suffices to show that $A(n) = \Omega\of{\frac{\E{Y}}{\log n}}$.

	We have:
	\beqnl
	A(n) = \sum_{d = 0}^{\infty} p(d)\of{1 - x^d} = (1-x)\sum_{d=0}^{\infty}p(d) \of{1 + x + \cdots + x^{d-1}}.
	\eeqnl
	Let $f: \N \to \R$ be the function $f(d) = \sum_{i=0}^{d-1}x^i$. Then
	\begin{gather*}
	\sum_{d=L+1}^{\infty}p(d)f(d)
		> f(L)\sum_{d=L+1}^{\infty}p(d)
		= f(L)(1 - P)
		= f(L)(1/P - 1)\sum_{d=0}^L p(d)\\
		> (1/P - 1)\sum_{d=0}^L p(d)f(d)
		= \sum_{d=0}^L (q(d) - p(d))f(d)
		= \sum_{d=0}^L q(d)f(d) - \sum_{d=0}^L p(d)f(d).
	\end{gather*}
	Hence,
	\beqnl
		\sum_{d=0}^{\infty}p(d)f(d) \geq \sum_{d=0}^{L}q(d)f(d).
	\eeqnl

	We know that $0 < x < 1$, so whenever $1 \leq d \leq L$,
	\beqnl
		f(d) = 1 + x + \cdots + x^{d - 1} > d x^{d-1} > d x^{L}.
	\eeqnl
	Returning to our expression for $A(n)$ and noting that $x^{L} = \of{1 - \beta/\log n}^{c \log n}$ converges to a constant as $n$ grows large,
	\beqnl
		A(n) \geq (1 - x)x^{L}\sum_{d=1}^{L} d q(d) = (1 - x)x^{L}\E{Y} = \Omega\of{\frac{\E{Y}}{\log n}}.
	\eeqnl
	This concludes the proof.
\end{proof}

If the maximum possible number of LRCs that can be assigned to a particular node is $O(\log n)$, the result becomes much cleaner.

\begin{corollary}
	Let $\graph$ be a harmonic ring where $p(n, \cdot)$ has mean $\ell$. Then if there is some constant $c > 0$ such that $p(n, d) = 0$ whenever $d \geq c \log n$, then $\RoutingTime_{n,p} = O(\log^2 n / \ell)$.\qed
\end{corollary}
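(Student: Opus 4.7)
The plan is to derive the corollary directly from Theorem~\ref{thm:general_dist_upper_bounds} by observing that the truncation hypothesis makes the conditioning trivial. Concretely, the theorem yields an $O(\log^2 n / \E{X \mid X \leq c\log n})$ bound whenever $\Prob{X \leq c\log n} > 0$, and under the corollary's assumption the random variable $X$ drawn from $p(n,\cdot)$ satisfies $X < c\log n$ almost surely, so the conditional expectation collapses to the unconditional mean.

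First I would verify that the theorem is applicable: since $p(n,d) = 0$ for $d \geq c\log n$, we have $\Prob{X \leq c\log n} = 1 > 0$, so the hypothesis of Theorem~\ref{thm:general_dist_upper_bounds} is satisfied with the same constant $c$. Next I would note that conditioning on a probability-one event does nothing, so
\[
  \E{X \mid X \leq c\log n} \;=\; \E{X} \;=\; \ell.
\]
Substituting this into the conclusion of the theorem gives $\RoutingTime_{n,p} = O(\log^2 n / \ell)$, which is the claim.

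There is essentially no obstacle here; the only subtlety is a bookkeeping check that the constant $c$ appearing in the corollary is the same one we feed into the theorem, and that the strict inequality $d \geq c\log n$ in the hypothesis is compatible with the theorem's (non-strict) conditioning event $X \leq c\log n$. Both are immediate. The corollary can therefore be disposed of in two or three lines.
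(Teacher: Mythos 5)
Your proposal is correct and matches the paper's (implicit) argument exactly: the paper states the corollary with no written proof precisely because the truncation hypothesis makes $\E{X \mid X \leq c\log n} = \E{X} = \ell$, so Theorem~\ref{thm:general_dist_upper_bounds} applies immediately.
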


This bound is tight \citep{ring_lower_bound_tight}.

\section{Optimal LRC-degree distributions}

The previous results demonstrate that the asymptotic performance of greedy routing depends almost entirely on the mean of the distribution used to choose the number of LRCs for each node. Experimentally, however, different distributions can result in significantly different average routing times. In this section, we prove that of those distributions with mean $\ell$, greedy routing is optimized when every node has $\floor{\ell}$ or $\ceiling{\ell}$ LRCs. This result holds not just for harmonic rings, but in any variant where a closer node is more likely to be selected as an LRC than one farther away.

Let $\bestRoute{j}{i}$ be the probability that a node at distance $j$ from the destination routes to a node at distance $i$ from the destination. Using this notation,
\beqnl
	T_r = 1 + \sum_{s = 0}^{r - 1} \bestRoute{r}{s} T_s \quad (r > 0).
\eeqnl

\begin{lemma}
	\label{lem:shorter_is_better}
	$T_r$ is an increasing function of $r$.
\end{lemma}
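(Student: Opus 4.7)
The plan is to proceed by strong induction on $r$. The base case $T_0 = 0 \leq T_1$ is immediate. For the inductive step, assume $T_0 \leq T_1 \leq \cdots \leq T_r$ and show $T_r \leq T_{r+1}$. Let $N_r$ denote the random distance from the destination $v$ of the node reached in one greedy step from a source at distance $r$; the given recurrence then reads $T_r = 1 + \E{T_{N_r}}$ for $r \geq 1$.

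The heart of the argument is the claim that $N_{r+1}$ stochastically dominates $N_r$. Granting this, the inductive hypothesis makes $T$ non-decreasing on the support $\set{0, \ldots, r}$ of $N_{r+1}$, so stochastic dominance yields $\E{T_{N_{r+1}}} \geq \E{T_{N_r}}$ and hence $T_{r+1} \geq T_r$, closing the induction.

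To prove the stochastic dominance, I first observe that an LRC of the source at ring-distance $\Delta > r$ lands at distance $n + r - \Delta \geq r + 1$ from $v$, so such an edge is never chosen greedily over the short edge. Writing $M_r = \max\of{\set{1} \cup \set{\Delta_j \suchthat \Delta_j \leq r}}$, where $\Delta_1, \ldots, \Delta_{D_u}$ are the source's LRC distances, we therefore have $N_r = r - M_r$. Now for $0 \leq m \leq r - 2$, the event $\set{N_r > m}$ coincides with the event that no LRC of the source has $\Delta$ in the window $\set{r - m, \ldots, r}$; for $m \geq r - 1$ we have $\Prob{N_r > m} = 0$. When $r$ is replaced by $r+1$, the window shifts to $\set{r + 1 - m, \ldots, r + 1}$, which has the same length $m+1$ but consists of larger values. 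Under the hypothesis that $\Prob{\Delta = k}$ is non-increasing in $k$, the probability that a single LRC falls in the window is non-increasing in $r$, so (conditional on $D_u$, and hence unconditionally) the probability that no LRC does is non-decreasing. Thus $\Prob{N_{r+1} > m} \geq \Prob{N_r > m}$ for every $m$.

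The main obstacle is arriving at the right one-step comparison. A tempting direct coupling reuses the same $\Delta_j$ values at both sources, but this fails precisely when some $\Delta_j$ equals $r + 1$: from the distance-$(r+1)$ source this gives a new distance of zero, while from the distance-$r$ source the same LRC overshoots $v$ and is unusable, potentially leaving the message far from the destination. Working instead with the \emph{no LRC in a shifting window} characterization sidesteps this and reduces the dominance claim to exactly the paper's monotonicity hypothesis on the LRC-distance distribution.
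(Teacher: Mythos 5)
Your proof is correct and follows essentially the same route as the paper's: an induction on $r$ whose inductive step compares the one-step landing distributions from distances $r$ and $r+1$, which is exactly the paper's inequality $\Prob{J_a \leq r} \geq \Prob{J_{a-1} \leq r-1}$ recast as stochastic dominance of $N_{r+1}$ over $N_r$ (the Abel-summation step in the paper is just the standard proof that dominance plus a monotone $T$ orders the expectations). Your shifting-window argument, including the observation that overshooting LRCs are never taken, supplies a careful justification of the key comparison that the paper dispatches in a single sentence, so it is a welcome elaboration rather than a different proof.
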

\begin{proof}
	Let $J_r$ be a random variable such that $\Prob{J_r = s} = \bestRoute{r}{r-s}$. For $r \geq 1$, define $\tau_r = T_r - T_{r - 1}$. Given $a < n$, assume that $\tau_i > 0$ whenever $i < a$. Then
	\beqnl
	\begin{split}
		\tau_a &= \sum_{r = 1}^{a - 1} \bestRoute{a}{r} T_r - \sum_{r = 1}^{a-2} \bestRoute{a - 1}{r} T_r\\
			&= \sum_{r=1}^{a-1} \Prob{J_a \leq r} \tau_{a - r}
				- \sum_{r=1}^{a-2} \Prob{J_{a-2} \leq r} \tau_{(a - 1) - r }\\
			&= \sum_{r=1}^{a-1} \Prob{J_a \leq r} \tau_{a - r}
				- \sum_{r=2}^{a-1} \Prob{J_{a - 1} \leq r-1} \tau_{a - r}\\
			&> \sum_{r = 2}^{a - 1} \of{\Prob{J_a \leq r}-\Prob{J_{a-1} \leq r - 1}}\tau_{a-r}\\
			&> 0
	\end{split}
	\eeqnl
	The first inequality results from the fact that closer nodes are more likely to be chosen as LRCs than those farther away (this is a sufficient condition for the proof to work). The lemma follows by induction.
\end{proof}

\begin{lemma}
	\label{lem:diminishing_returns}
	Let $p$ be a distribution on $\N$ with mean $\mu$, and let $f : \R \rightarrow \R$ be a twice-differentiable function with $f(x) \geq 0$, $f'(x) < 0$, and $f''(x) > 0$. Then $\sum p(d)f(d)$ is smallest when the support of $p$ is $\set{\floor{\mu}, \ceiling{\mu}}$.
\end{lemma}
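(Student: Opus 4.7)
The plan is to reduce this to a one-line Jensen-type inequality via the secant line through the two integer points straddling $\mu$. Set $\alpha = \ceiling{\mu} - \mu \in [0,1]$, and let $p^*$ denote the canonical candidate distribution supported on $\set{\floor{\mu},\ceiling{\mu}}$ with $p^*(\floor{\mu}) = \alpha$ and $p^*(\ceiling{\mu}) = 1-\alpha$, which has mean exactly $\mu$. Let $L : \R \to \R$ denote the affine function that interpolates $f$ at $\floor{\mu}$ and $\ceiling{\mu}$.

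The key geometric observation is that since $f'' > 0$, the chord $L$ lies on or above $f$ on $[\floor{\mu},\ceiling{\mu}]$ and strictly below $f$ outside that interval. In particular, at every integer $d \in \N$ we have $f(d) \geq L(d)$, with equality only at $d = \floor{\mu}$ or $d = \ceiling{\mu}$. This uses only the strict convexity $f'' > 0$; the sign and monotonicity hypotheses on $f$ turn out to be incidental for the lemma itself (they enter elsewhere in the paper).

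The main computation is then a single application of linearity. Using $\sum_d p(d) d = \mu$ and the affinity of $L$,
\beqnl
	\sum_d p(d) f(d) \;\geq\; \sum_d p(d) L(d) \;=\; L\of{\sum_d p(d)\, d} \;=\; L(\mu) \;=\; \alpha f(\floor{\mu}) + (1-\alpha) f(\ceiling{\mu}),
\eeqnl
and the right-hand side is exactly $\sum_d p^*(d) f(d)$. Strict convexity forces equality iff the support of $p$ is contained in $\set{\floor{\mu},\ceiling{\mu}}$, pinning down the minimizing distribution. The degenerate case $\mu \in \N$ collapses to ordinary Jensen's inequality with $\floor{\mu} = \ceiling{\mu} = \mu$.

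I do not anticipate a serious obstacle: the minorant $L$ is tailored so that it coincides with $f$ at the two integers closest to $\mu$, which is exactly why the discrete-support constraint causes no slack. The only real care needed is boundary book-keeping when $\mu$ is an integer, which is trivially handled by the fact that the chord degenerates and both inequalities become Jensen in one variable.
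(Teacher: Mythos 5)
The paper does not actually prove this lemma: it says explicitly that ``the proof \ldots\ will be omitted'' and offers only a heuristic, namely a pairwise exchange argument (shifting mass from some $a<\mu$ toward $\floor{\mu}$ costs less than the gain from shifting mass from some $b>\mu$ toward $\ceiling{\mu}$); a commented-out fragment of the Theorem~\ref{thm:equal_is_best} proof suggests the authors had in mind iterating such exchanges to shrink the support step by step, which would need extra care for distributions with infinite support. Your argument is correct and takes a genuinely different, and cleaner, route: instead of local perturbations you exhibit a single global affine minorant --- the secant line $L$ through $(\floor{\mu},f(\floor{\mu}))$ and $(\ceiling{\mu},f(\ceiling{\mu}))$ --- note that strict convexity gives $f(d)\geq L(d)$ at every integer $d$ with equality only at the two interpolation nodes, and then use $\sum_d p(d)\,d=\mu$ and affinity to get $\sum_d p(d)f(d)\geq L(\mu)=\alpha f(\floor{\mu})+(1-\alpha)f(\ceiling{\mu})$ in one step. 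This yields not just optimality of the two-point distribution but also its uniqueness as minimizer, handles infinite-support $p$ with no extra work, and correctly isolates $f''>0$ as the operative hypothesis (with $f\geq 0$ serving only to keep the sum well defined; $f'<0$ is indeed used elsewhere in the paper, via Lemma~\ref{lem:shorter_is_better}, not here). The only spot to tighten in a final write-up is the integer-mean case: there is no chord through two distinct nodes, so you should say explicitly that you replace $L$ by a supporting (tangent) line at $\mu$, which is the one-point Jensen inequality you allude to.
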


This lemma, the proof of which will be omitted, makes a simple statement about optimizing the expected value of a function that provides diminishing returns. When $a < \mu < b$, the benefit of increasing $p(\floor{\mu})$ (while decreasing $p(a)$) is greater than corresponding the cost of increasing $p(\ceiling{\mu})$ (while decreasing $p(b)$).

When considering what benefit might be obtained for greedy routing by varying the LRC-degree distribution, 
we find that expected route length is governed by this lemma.  That is, roughly speaking, a node gets diminishing returns on the expected jump lengths it can provide with each additional LRC it is allocated. The following theorem argues that since longer jumps are always better (Lemma~\ref{lem:shorter_is_better}), the best thing to do is to ensure that LRC-degree selection varies as little as possible (Lemma~\ref{lem:diminishing_returns}).

\begin{theorem}
	\label{thm:equal_is_best}
	Let $S_{\ell}$ be the set of probability distributions on $\N$ with mean $\ell \in \N$. Let $p \in S_{\ell}$ be the distribution with support $\set{\floor{\ell}, \ceiling{\ell}}$. Then for all $q \in S_{\ell}$, $\RoutingTime_{n,p} \leq \RoutingTime_{n,q}$.
\end{theorem}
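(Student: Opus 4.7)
The plan is to prove by strong induction on $r$ that $T_r^{(p)} \leq T_r^{(q)}$ for every $q \in S_\ell$; averaging over $r$ then yields $\RoutingTime_{n,p} \leq \RoutingTime_{n,q}$. The base case $T_0 = 0$ holds for every distribution. For the inductive step at $r \geq 1$, I would rewrite the recursion as $T_r^{(d)} = 1 + \sum_k d(k)\, g_r^{(d)}(k)$, where $g_r^{(d)}(k) := \E{T^{(d)}_{X_k(r)}}$ and $X_k(r)$ is the distance to the target after one greedy step from a node at distance $r$ holding $k$ LRCs. Since the ring edge alone forces $X_k(r) \leq r-1$, the function $g_r^{(d)}$ depends only on $\set{T_s^{(d)} \suchthat s < r}$. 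The inductive hypothesis therefore gives $g_r^{(q)}(k) \geq g_r^{(p)}(k)$ for every $k$, whence $T_r^{(q)} \geq 1 + \sum_k q(k)\, g_r^{(p)}(k)$, and it remains to establish the one-step inequality $\sum_k q(k)\, g_r^{(p)}(k) \geq \sum_k p(k)\, g_r^{(p)}(k)$.

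The next step is to obtain a clean formula for $g_r^{(p)}$ by exploiting the order-statistic structure of greedy routing. If $Y_1, Y_2, \ldots$ are i.i.d.\ copies of the distance from a single LRC to the target, then $X_k(r) = \min(r-1,\, Y_1, \ldots, Y_k)$, so setting $\rho_s := \Prob{Y_1 \geq s}$ one obtains $\Prob{X_k(r) \geq s} = \rho_s^k$ for $1 \leq s \leq r-1$. Summation by parts, together with $T_0^{(p)}=0$ and $\tau_s := T_s^{(p)} - T_{s-1}^{(p)} > 0$ from Lemma~\ref{lem:shorter_is_better}, yields
\[
  g_r^{(p)}(k) \;=\; \sum_{s=1}^{r-1} \tau_s\, \rho_s^k.
\]
The natural extension $\tilde g(x) = \sum_s \tau_s\, \rho_s^x$ is smooth on $[0,\infty)$, with $\tilde g'(x) = \sum_s \tau_s\, \rho_s^x \ln \rho_s$ and $\tilde g''(x) = \sum_s \tau_s\, \rho_s^x (\ln \rho_s)^2$. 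The standing assumption that closer nodes are more likely to be selected as LRCs guarantees $\rho_s < 1$ for some $s$, so $\tilde g$ is positive, strictly decreasing, and strictly convex. Because $\ell \in \N$, the distribution $p$ supported on $\set{\floor{\ell}, \ceiling{\ell}} = \set{\ell}$ is precisely the minimizer identified by Lemma~\ref{lem:diminishing_returns}, closing the induction.

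The main obstacle I anticipate is disentangling the two ways in which $T_r^{(d)}$ depends on $d$: through the explicit outer expectation over degrees, and through the recursively defined values $T_s^{(d)}$. The strong induction dissolves this coupling by freezing the inner dependence at $p$, turning the leftover statement into a pure diminishing-returns inequality for the \emph{fixed} function $g_r^{(p)}(k)$. Once the order-statistic identity $\Prob{X_k(r) \geq s} = \rho_s^k$ is recognized, $g_r^{(p)}$ is manifestly a nonnegative combination of decaying exponentials, so convexity and monotonicity come for free and Lemma~\ref{lem:diminishing_returns} applies directly.
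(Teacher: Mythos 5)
Your proposal is correct and follows essentially the same route as the paper: both rewrite the one-step expectation via Abel summation as $\sum_s \tau_s \rho_s^k$ (with $\tau_s > 0$ supplied by Lemma~\ref{lem:shorter_is_better} and the power structure coming from the i.i.d.\ LRC draws), extend to real exponents to obtain strict monotonicity and convexity, and then invoke Lemma~\ref{lem:diminishing_returns}. Your explicit strong induction on $r$, which decouples the degree distribution's role at the current node from its role in the recursively defined values $T_s$ for $s<r$, makes rigorous a step the paper handles only with the informal remark that the optimality of $p$ at the source holds ``regardless of what distribution is used'' for the downstream nodes.
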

\begin{proof}
	\comment{
	For $q \in S_{\ell}$, define $N(q) = |\set{d \suchthat q(d) > 0}|$. Given $q$ with $N(q) > 0$, we will construct $q'$ such that $N(q') < N(q)$ and $\RoutingTime_{n,q'} < \RoutingTime_{n,q}$. (If $N(q) = \infty$, a similar process can be used to construct $q'$ with $N(q') < \infty$).
	
	Since $N(q) > 1$, there exist $a, b \in \N$ such that $a < \ell < b$ and $q(a), q(b) > 0$. Let $r = (\ell - a) / (b - \ell)$. Assume without loss of generality that $rq(a) \leq q(b)$. Define:
	\beqnl
	q'(d)  = \left\{
		\begin{array}{l l}
			0 			& \quad \text{if } d = a,\\
			q(\ell)+ (1+r)q(a) 	& \quad \text{if } d = \ell,\\
			q(b) - rq(a)		& \quad \text{if } d = b,\\
			q(d)			& \quad \text{otherwise.}
		\end{array} \right.
	\eeqnl
	(If $rq(a) > q(b)$, then we would set $q'(b) = 0$ instead, and adjusted the other values accordingly). One can confirm by direct computation that $q' \in S_{\ell}$.
	
	We need to show that changing from $q$ to $q'$ results in shorter routing times. To do so, we will permit nodes to draw from different LRC-degree distributions, and argue that if nodes $0, 1, \ldots u - 1$ use the same distribution, then the expected routing time from $u$ to $0$ is smaller when $u$ uses $q'$ than when $u$ uses $q$. It follows that the optimal distribution for $u$ is $p$. Choosing this distribution for all nodes thus simultaneously optimizes routing times for all (source, destination) pairs.\footnote{Again, this is subject to the constraint that the same distribution is used for all nodes. The theorem is not true when this condition is dropped; for example, an 8-node harmonic ring with a total of eight LRCs is optimal when nodes alternate between two and zero LRCs.}
	}
	
	Consider two arbitrary nodes, $u$ and $v$. We will show that the expected routing time from $u$ to $v$ is smallest when $D_u$ is chosen according to $p$, and that this is true regardless of what distribution is used to choose $D_{u+1}, \ldots, D_v$ (as long as the same distribution is used for all them).\footnote{This theorem is false if different nodes are assigned LRC-degrees based on different distributions; for example, an 8-node harmonic ring with 8 total LRCs is optimal when nodes alternate between two and zero LRCs.}
	
	So assume that $D_{u+1}, \ldots, D_{v}$ are chosen from the same distribution (keeping Lemma~\ref{lem:shorter_is_better} applicable). As before, let $\tau_i = T_i - T_{i-1}$ (here we will restrict the definitions of $T_i$ and $\tau_i$ to refer only to greedy paths where the destination node is $v$); by Lemma~\ref{lem:shorter_is_better}, $\tau_i > 0$. Let $\Delta$ be a random variable such that $\Prob{\Delta = r}$ is equal to the probability that $u$ routes to $u + r$. Define $T_r(d)$ to be $T_r$ given that the source node, $u$, has been assigned $d$ LRCs.
	\beqnl
	\begin{split}
		T_r(d) &= 1 + \sum_{s=1}^{r} \Prob{\Delta = s \given D_u = d} T_{r - s}\\
			&= 1 + \sum_{s=1}^{r} \Prob{\Delta \leq s \given D_u = d} \tau_{r - s}\\
			&= 1 + \sum_{s=1}^{r} \Prob{\Delta \leq s \given D_u = 1}^d \tau_{r - s}.
	\end{split}
	\eeqnl
	This last equality allows us to extend the definition of $T_r(d)$ to include all $d \in \R$. Letting $\alpha_r = \Prob{\Delta \leq r \given D_u = 1}$, we have, for all $d$,
	\beqnl
		T_r'(d) = \sum_{s=1}^{r} (\log \alpha) \alpha_r^d \tau_{r - s} < 0
	\eeqnl
	and
	\beqnl
		T_r''(d) = \sum_{s=1}^{r} ( \log^2 \alpha) \alpha_r^d \tau_{r - s} > 0.
	\eeqnl
	By Lemma~\ref{lem:diminishing_returns}, $T_r = \sum q(d)T_r(d)$ is smallest when $q = p$. Hence using $p$ for all nodes simultaneously minimizes routing times over all distances.
\end{proof}

\bibliographystyle{plain}

\bibliography{seth-rings}
\end{document}